\newtheorem{proposition}{Proposition}
\newtheorem{corollary}{Corollary}
\newcommand{\Cset}{\mathcal{C}}
\newcommand{\Cest}{\widehat{C}}
\newcommand{\argmax}{\mathop{\mathrm{argmax}}}
\def\E{\mathbb{E}}
\def\P{\mathbb{P}}
\long\def\@makecaption#1#2{
        \vskip 0.8ex
        \setbox\@tempboxa\hbox{\small {\bf #1:} #2}
        \parindent 1.5em  %% How can we use the global value of this???
        \dimen0=\hsize
        \advance\dimen0 by -3em
        \ifdim \wd\@tempboxa >\dimen0
                \hbox to \hsize{
                        \parindent 0em
                        \hfil 
                        \parbox{\dimen0}{\def\baselinestretch{0.96}\small
                                {\bf #1.} #2
                                %%\unhbox\@tempboxa
                                } 
                        \hfil}
        \else \hbox to \hsize{\hfil \box\@tempboxa \hfil}
        \fi
        }
\newcommand{\prior}{\mathbb{P}_{pr}}
\newcommand{\perceived}{\Tilde{\ans}}
\newcommand{\perceivedRand}{\Tilde{Y}}
\newcommand{\predict}{\widehat{\ans}}
\newcommand{\truth}{\ans^*}
\newcommand{\ans}{y}
\newcommand{\data}{D}
\newcommand{\AIout}{f(M)}
\newcommand{\num}{N}
\newcommand{\pagree}[1]{P_{a_{#1}}}
\newcommand{\pagreer}[1]{P^r_{a_{#1}}}
\newcommand{\pagreew}[1]{P^w_{a_{#1}}}
\newcommand\given[1][]{\:#1\vert\:}
\newcommand{\tottime}[1]{T_{#1}}
\newcommand{\ho}{"Human only"}
\newcommand{\conf}{"Confidence-based time"}
\newcommand{\confExp}{"Confidence-based time with explanation"}
\newcommand{\random}{"Random time"}
\newcommand{\const}{"Constant time"}
\title{Deciding Fast and Slow: The Role of Cognitive Biases in AI-assisted Decision-making}
\author{
Charvi Rastogi\footnote{Corresponding author. Email: \texttt{crastogi@cs.cmu.edu}} $^1$ , Yunfeng Zhang$^2$, Dennis Wei$^3$,\\ Kush R. Varshney$^3$, Amit Dhurandhar$^3$, Richard Tomsett$^4$ \vspace{2mm} \\ 
 $^1$Carnegie Mellon University, $^2$Twitter, $^3$IBM Research, $^4$Onfido
}
\date{}
\begin{document}
\maketitle

\begin{abstract}
       Several strands of research have aimed to bridge the gap between artificial intelligence (AI) and human decision-makers in AI-assisted decision-making, where humans are the consumers of AI model predictions and the ultimate decision-makers in high-stakes applications. However, people’s perception and understanding are often distorted by their cognitive biases, such as confirmation bias, anchoring bias, availability bias, to name a few. In this work, we use knowledge from the field of cognitive science to account for cognitive biases in the human-AI collaborative decision-making setting, and mitigate their negative effects on collaborative performance. To this end, we mathematically model cognitive biases and provide a general framework through which researchers and practitioners can understand the interplay between cognitive biases and human-AI accuracy. We then focus specifically on anchoring bias, a bias commonly encountered in human-AI collaboration. We implement a time-based de-anchoring strategy and conduct our first user experiment that validates its effectiveness in human-AI collaborative decision-making. With this result, we design a time allocation strategy for a resource-constrained setting that achieves optimal human-AI collaboration under some assumptions. We, then, conduct a second user experiment which shows that our time allocation strategy with explanation can effectively de-anchor the human and improve collaborative performance when the AI model has low confidence and is incorrect.
\end{abstract}

\section{Introduction}
\label{sec:introduction}
It should be a truth universally acknowledged that a human decision-maker in possession of an AI model must be in want of a collaborative partnership. Recently, we have seen a rapid increase in the deployment of machine learning (ML) models in decision-making systems, where the AI models serve as helpers to human experts in many high-stakes settings. Examples of such tasks can be found in healthcare, financial loans, criminal justice, job recruiting, and fraud monitoring. Specifically, judges use risk assessments to determine criminal sentences, banks use models to manage credit risk, and doctors use image-based ML predictions for diagnosis, to list a few.

The emergence of AI-assisted decision-making in society has raised questions about whether and when to rely on the AI model's decisions. These questions can be viewed as problems of communication between AI and humans, and research in interpretable, explainable, and trustworthy machine learning as efforts to improve aspects of this communication. However, a key component of human-AI communication that is often sidelined is the human decision-makers themselves. Humans' perception of the communication received from AI is at the core of this communication gap. Research in communication exemplifies the need to model receiver characteristics, thus implying the need to understand and account for human cognition in collaborative decision-making.

As a step towards studying human cognition in AI-assisted decision-making, our work focuses on the role of cognitive biases in this setting. Cognitive biases, introduced in the seminal work by~\citet{Tversky1124}, represent a systematic pattern of deviation from rationality in judgment wherein individuals create their own "subjective reality" from their perception of the input. An individual's perception of reality, not the objective input, may dictate their behavior in the world, thus, leading to distorted and inaccurate judgment. While cognitive biases and their effects on decision-making are well known and widely studied, we note that AI-assisted decision-making presents a new decision-making paradigm and it is important to study their role in this new paradigm, both analytically and empirically.

\begin{figure}[t]
\centering
    \includegraphics[ height=4.6cm]{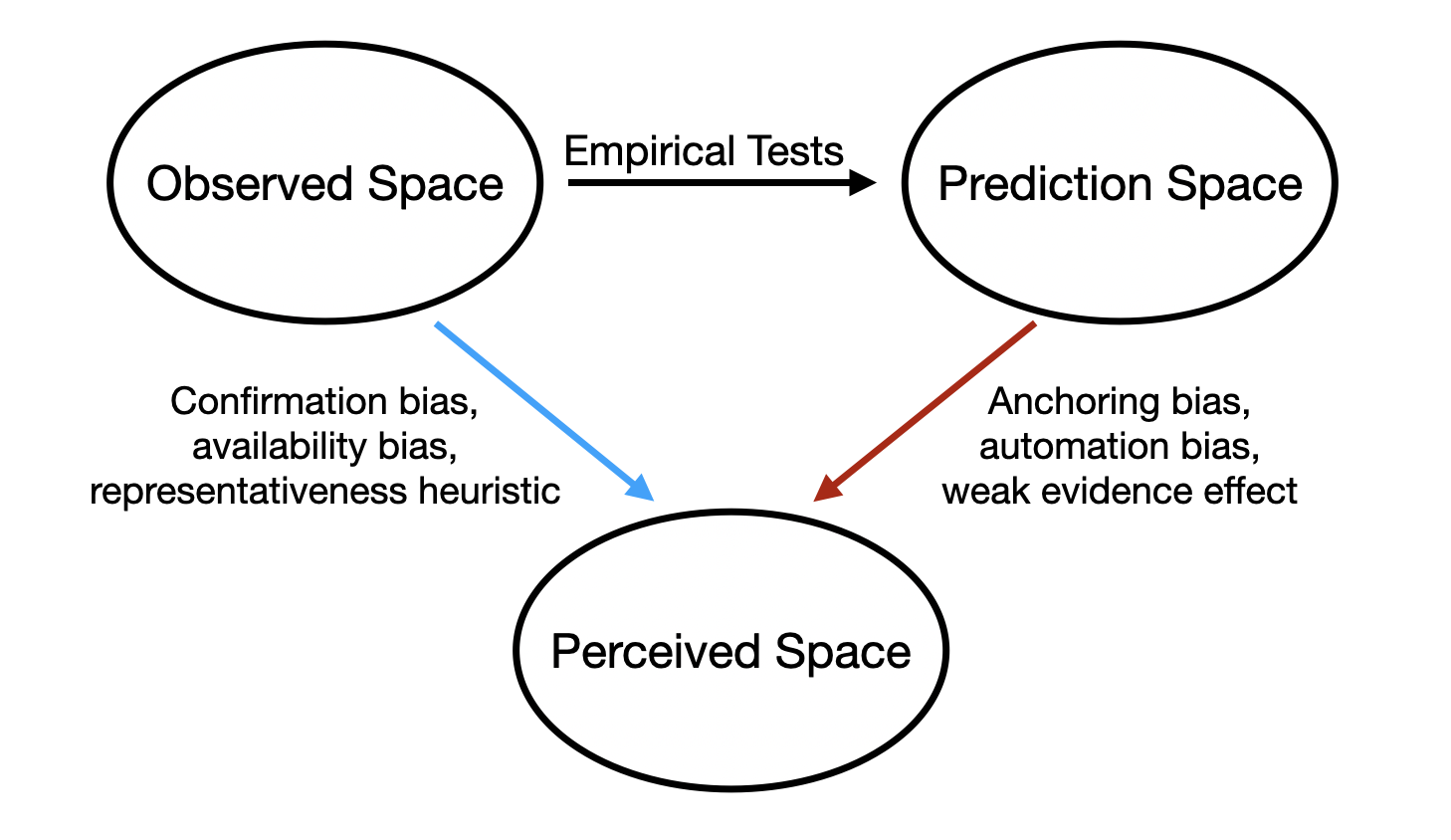}
    \caption{Three constituent spaces to capture different interactions in human-AI collaboration. The interactions of the perceived space, representing the human decision-maker, with the observed space and the prediction space may lead to cognitive biases. The definition of the different spaces is partially based on ideas of~\citet{yeom2018discriminative}.}
    \label{fig:explanatoryDiagram}
\end{figure}

Our first contribution is illustrated partially in Figure~\ref{fig:explanatoryDiagram}. In a collaborative decision-making setting, we define the perceived space to represent the human decision-maker. Here, we posit that there are two interactions that may lead to cognitive biases in the perceived space -- (1) interaction with the observed space which consists of the feature space and all the information the decision-maker has acquired about the task, (2) interaction with the prediction space representing the output generated by the AI model, which could consist of the AI decision, explanation, etc. In Figure~\ref{fig:explanatoryDiagram}, we associate cognitive biases that affect the decision-makers' priors and their perception of the data available with their interaction with the observed space. Confirmation bias, availability bias, the representativeness heuristic, and bias due to selective accessibility of the feature space are mapped to the observed space. On the other hand, anchoring bias and the weak evidence effect are mapped to the prediction space. Based on this categorization of biases, we provide a model for some of these biases using our biased Bayesian framework.

To focus our work in the remainder of the paper, we study and provide mitigating strategies for anchoring bias in AI-assisted decision-making, wherein the human decision-maker forms a skewed perception due to an anchor (AI decision) available to them, which limits the exploration of alternative hypotheses. Anchoring bias manifests through blind reliance on the anchor. While poorly calibrated reliance on AI has been studied previously from the lens of trust~\citep{zhang2020effect, tomsett2020rapid, okamura2020adaptive}, in this work we analyse reliance miscalibration  due to anchoring bias which has a different mechanism and, hence, different mitigating strategies.

\citet{Tversky1124} explained that anchoring bias manifests through the anchoring-and-adjustment heuristic wherein, when asked a question and presented with any anchor, people adjust away insufficiently from the anchor. Building on the notion of bounded rationality, previous work~\citep{Lieder2018anchoringBias} attributes the adjustment strategy to a resource-rational policy wherein the insufficient adjustment is a rational trade-off between accuracy and time. To test this in our setting, we conduct an experiment with human participants on Amazon Mechanical Turk to study whether allocating more resources --- in this case, time --- alleviates anchoring bias. This bias manifests through the rate of agreement with the AI prediction. Thus, by measuring the rate of agreement with the AI prediction in several carefully designed trials, we validate that time indeed is a useful resource that helps the decision-maker sufficiently adjust away from the anchor when needed. We note that the usefulness of time in remediating anchoring bias is an intuitive idea discussed in previous works~\cite{Tversky1124}, but one that has not been empirically validated to our knowledge. Thus, it is necessary to confirm this finding experimentally, especially in the AI-assisted setting.

As our first experiment confirms that more time helps reduce bounded-rational anchoring to the AI decision, one might suggest that giving more time to all decisions should yield better decision-making performance from the human-AI team. However, this solution does not utilize the benefits of high-quality AI decisions available in many cases. Moreover, it does not account for the limited availability of time. Thus, we formulate a novel resource (time) allocation problem that factors in the effects of anchoring bias and the variance in AI accuracy to maximize human-AI collaborative accuracy. We propose a time allocation policy and prove its optimality under some assumptions. We also conduct a second user experiment to evaluate human-AI team performance under this policy in comparison with several baseline policies. Our results show that while the overall performance of all policies considered is roughly the same, our policy helps the participants de-anchor from the AI prediction when the AI is incorrect and has low confidence.

The time allocation problem that we study is motivated by real-world AI-assisted decision-making settings. Adaptively determining the time allocated to a particular instance for the best possible judgement can be very useful in multiple applications. For example, consider a (procurement) fraud monitoring system deployed in multinational corporations, which analyzes and flags high-risk invoices \citep{procfraud}. Given the scale of these systems, which typically analyze tens of thousands of invoices from many different geographies daily, the number of invoices that may be flagged, even if a small fraction, can easily overwhelm the team of experts validating them. In such scenarios, spending a lot of time on each invoice is not admissible. An adaptive scheme that takes into account the biases of the human and the expected accuracy of the AI model is highly desirable to produce the most objective decisions. Our work is also applicable to the other aforementioned domains, such as in criminal proceedings where judges have to look over many different case documents and make quick decisions, often in under a minute.

In summary, we make the following contributions: 

\begin{itemize}
    \item We provide a biased Bayesian framework for modeling biased AI-assisted decision making. Based on the source of the cognitive biases, we situate some well-known cognitive biases within our framework.
    \item Focusing on anchoring bias in AI-assisted decision-making, we show with human participants that allocating more time to a decision reduces anchoring in this setting.
    \item We formulate a time allocation problem to maximize human-AI team accuracy that accounts for the anchoring-and-adjustment heuristic and the variance in AI accuracy.
    \item We propose a confidence-based allocation policy and identify conditions under which it achieves optimal team performance.
    \item Through a carefully designed human subject experiment, we evaluate the real-world effectiveness of the confidence-based time allocation policy, showing that  when confidence-based information is displayed, it helps humans de-anchor from incorrect and low-confidence AI predictions.

\end{itemize}
\section{Related work}
\label{sec:relatedWork}
%\commentdennis{My general comments for this section are: 1) Have each paragraph focus on one topic, identified in the first sentence of the paragraph (I think this is largely in place); 2) Try to ensure that the works discussed in the paragraph align with the topic; 3) Then see whether the paragraphs should be reordered to flow more logically.}
Cognitive biases are an important factor in human decision-making~\citep{barnes1984cognitive, das1999cognitive, joyce2016decision}, and have been studied widely in decision-support systems research~\citep{arnott2006cognitive, zhang2015designing,solomon2014customization, gloria2019dss}. Cognitive biases also show up in many aspects of collaborative behaviours~\citep{silverman1992humancomp, janssen2020collaborative, rainer2010barriers}. More specifically, there exists decades-old research on cognitive biases~\citep{Tversky1124} such as confirmation bias~\citep{Nickerson1998ConfirmationBA, klayman1995varieties,oswald2004confirmation}, anchoring bias~\citep{furnham2011anchoring, epley-anchoring, epley2001putting}, automation bias~\cite{lee2004trust}, availability bias~\citep{tversky1973availability}, etc. 

Recently, as AI systems are increasingly embedded into high stakes human decisions, understanding human behavior, and reliance on technology have become critical, “Poor partnerships between people and automation will become increasingly costly and catastrophic”~\citep{lee2004trust}. This concern has sparked crucial research in several directions, such as human trust in algorithmic systems, interpretability, and explainability of machine learning models~\citep{arnold2019factsheets, zhang2020effect,tomsett2020rapid,  keng2018trust, doshi2017towards, lipton2018mythos, adadi2018peeking, preece2018asking}.

In parallel, research in AI-assisted decision-making has worked on improving the human-AI collaboration~\cite{lai2020chicago, lai2019explanations, bansal2020does, bansal2019updates, green2019principles, okamura2020adaptive}. These works experiment with several heuristic-driven AI explanation techniques that do not factor in all the characteristics of the human at the end of the decision-making pipeline. Specifically, the experimental results in~\citep{bansal2020does} show that explanations supporting the AI decision tend to exacerbate over-reliance on the AI decision. %The perception of the human agents plays an important role in collaborative decision-making~\citep{yeung2020sequential, bansal2019beyond, lee2018understanding}. 
In contrast, citing a body of research in psychology, philosophy, and cognitive science, Miller~\cite{miller2019explanation} argues that the machine learning community should move away from imprecise, subjective notions of "good" explanations and instead focus on reasons and thought processes that people apply for explanation selection. In agreement with Miller, our work builds on literature in psychology on cognitive biases to inform modeling and effective de-biasing strategies. Our work provides a structured approach to addressing problems, like over-reliance on AI, from a cognitive science perspective. In addition, we adopt a two-step process, wherein we inform our subsequent de-biasing approach (Experiment 2) based on the results of our first experiment, thus, paving the pathway for experiment-driven human-oriented research in this setting.   

Work on cognitive biases in human-AI collaboration is still rare, however. Recently,  \citet{furnkranz2020cognitive} evaluated a selection of cognitive biases to test whether minimizing the complexity or length of a rule yields increased interpretability of machine learning models.~\citet{kliegr2018review} review twenty different cognitive biases that affect the interpretability and associated de-biasing techniques. Both these works~\citep{furnkranz2020cognitive, kliegr2018review} are specific to rule-based ML models.~\citet{baudel2020addressing} address complacency/authority bias in using algorithmic decision aids in business decision processes.~\citet{wang-chi-xaiframework} propose a conceptual framework for building explainable AI based on the literature on cognitive biases. Building on these works, our work provides novel mathematical models for the AI-assisted setting to identify the role of cognitive biases. Contemporaneously,~\citet{totrustbucinca2021} studies the use of cognitive forcing functions to reduce over-reliance in human-AI collaboration.

The second part of our work focuses on anchoring bias. The phenomenon of anchoring bias in AI-assisted setting has also been studied as a part of automation bias~\cite{lee2004trust} wherein the users display over-reliance on AI due to blind trust in automation. Previous experimental research has also shown that people do not calibrate their reliance on AI based on its accuracy~\citep{green2019principles}. Several studies suggest that people are unable to detect algorithmic errors~\citep{poursabzi2018manipulating}, are biased by irrelevant information~\citep{englich2006playing}, rely on algorithms that are described as having low accuracy, and trust algorithms that are described as accurate but present random predictions~\citep{springer2018dice}. These behavioural tendencies motivate a crucial research question --- how to account for these heuristics, often explained by cognitive biases such as anchoring bias. In this direction, our work is the first to empirically and analytically study a time-based de-biasing strategy to remediate anchoring bias in the AI-assisted setting. 

Lastly, the work by Park et al.~\citep{park2019slowAlgorithm} considers the approach of forcing decision-makers to spend more time deliberating their decision before the AI prediction is provided to them. In this work, we consider the setting where the AI prediction is provided to the decision-maker beforehand which may lead to anchoring bias. Moreover, we treat time as a limited resource and accordingly provide optimal allocation strategies.
%Lastly, the work by~\citet{Raghu2019-hz} is a notable work on optimizing human-AI collaborative performance. Similar to our work, they formulate a resource allocation problem and propose a solution, wherein the tasks are allocated to either the humans (constrained resource) or the AI, depending on their probability of correctness for the corresponding tasks. In our case, the human decision-maker participates in all tasks and the time they are given for each task is varied.}

\section{Problem setup and modeling}
\label{sec:setup}
 We consider a collaborative decision-making setup, consisting of a machine learning algorithm and a human decision-maker. First, we precisely describe our setup and document the associated notation. Following this, we provide a Bayesian model for various human cognitive biases induced by the human-AI collaborative process. 
 
Our focus in this paper is on the AI-assisted decision-making setup, wherein the objective of the human is to correctly classify the set of feature information available into one of two categories. Thus, we have a binary classification problem, where the true class is denoted by  $\truth\in \{0,1\}$. To make the decision/prediction, the human is presented with feature information, and we denote the complete set of features available pertaining to each sample by $\data$. In addition to the feature information, the human is also shown the output of the machine learning algorithm. Here, the AI output could consist of several parts, such as the prediction, denoted by $\predict\in \{0,1\}$, and the machine-generated explanation for its prediction. We express the complete AI output as a function of the machine learning model, denoted by $\AIout$.  Finally, we denote the decision made by the human decision-maker by $\perceived \in \{0,1\}$.
\\

We now describe the approach towards modeling the behavior of human decision-makers when assisted by machine learning algorithms.

\subsection{Bayesian decision-making}
\label{sec:bayesian}
Bayesian models for human cognition have become increasingly prominent across a broad spectrum of cognitive science~\citep{tenenbaum1999bayesian,griffiths2006optimal, chater2006cognition}. The Bayesian approach is thoroughly embedded within the framework of decision theory. Its basic tenets are that opinions should be expressed in terms of subjective or personal probabilities, and that the optimal revision of such opinions, in the light of relevant new information, should be accomplished via Bayes' theorem. 

First, consider a simpler setting, where the decision-maker uses the feature information available, $\data$, and makes a decision $\perceived\in\{0,1\}$. Let the decision variable be denoted by $\perceivedRand$. %, such that $\perceived = \argmax_{i\in\{0,1\}}\P(\perceivedRand=i)$. 
Based on literature in psychology and cognitive science~\citep{griffiths2006optimal, chater2006cognition}, we model a rational decision-maker as Bayes' optimal. That is, given a  prior on the likelihood of the prediction, $\prior(\perceivedRand)$ and the data likelihood distribution $\P(\data|\perceivedRand)$, the decision-maker picks the hypothesis/class with the higher posterior probability. Formally, the Bayes' theorem states that
\begin{align}
    \P(\perceivedRand=i|\data) = \dfrac{\P(\data|\perceivedRand=i)\prior(\perceivedRand=i)}{\sum_{j\in \{0,1\}}\P(\data|\perceivedRand=j)\prior(\perceivedRand=j)},
\end{align}
where $i\in \{0,1\}$ and the human decision is given by 
$\perceived = \argmax_{i\in\{0,1\}}\P(\perceivedRand=i|\data)$. Now, in our setting, in addition to the feature information available, the decision-maker takes into account the output of the machine learning algorithm, $\AIout$, which leads to following Bayes' relation 
\begin{align}
    \P(\perceivedRand|\data,\AIout)  \propto \P(\data,\AIout|\perceivedRand)\prior(\perceivedRand).
\end{align}
We assume that conditioned on the decision-maker's decision $\perceivedRand$, they perceive the feature information and the AI output independently, which gives 
\begin{align}\label{eq:naiveBayes}
    \P(\perceivedRand|\data,\AIout)  \propto \P(\data|\perceivedRand)\P(\AIout|\perceivedRand)\prior(\perceivedRand),
\end{align}
where $\P(\AIout|\perceivedRand)$ indicates the conditional probability of the AI output perceived by the decision-maker. The assumption in \eqref{eq:naiveBayes} is akin to a naive Bayes' assumption of conditional independence, but we only assume conditional independence between $\data$ and $\AIout$ and not between components within $\data$ or components within $\AIout$. This concludes our model for a rational decision-maker assisted by a machine learning model. 

In reality, the human decision-maker may behave differently from a fully rational agent due to their cognitive biases. 
In some studies~\citep{biased-bayesian-infer,payzan2011risk, payzan2012temp}, such deviations have  been  explained  by  introducing  exponential  biases (i.e. inverse temperature parameters) on Bayesian inference because these were found useful in expressing bias levels. We augment the modeling approach in~\citep{biased-bayesian-infer} to a human-AI collaborative setup. Herein we model the biased Bayesian estimation as 
\begin{align}
    \P(\perceivedRand|\data,\AIout)  \propto \P(\data|\perceivedRand)^\alpha \P(\AIout|\perceivedRand)^\beta \prior(\perceivedRand)^\gamma,
    \label{eq:exponent}
\end{align}
where $\alpha, \beta, \gamma$ are variables that represent the  biases in different factors in the Bayesian inference. 

Equation~\eqref{eq:exponent} allows us to understand and model several cognitive biases arising in AI-assisted decision making. To facilitate the following discussion, we take the ratio between \eqref{eq:exponent} evaluated for $\perceivedRand=1$ and \eqref{eq:exponent} for $\perceivedRand=0$:
\begin{equation}\label{eq:exponentRatio}
\frac{\P(\perceivedRand=1|\data,\AIout)}{\P(\perceivedRand=0|\data,\AIout)} = \left( \frac{\P(\data|\perceivedRand=1)}{\P(\data|\perceivedRand=0)} \right)^\alpha \left( \frac{\P(\AIout|\perceivedRand=1)}{\P(\AIout|\perceivedRand=0)} \right)^\beta \left( \frac{\prior(\perceivedRand=1)}{\prior(\perceivedRand=0)} \right)^\gamma.
\end{equation}
The human decision is thus $\perceivedRand=1$ if the ratio is greater than $1$ and $\perceivedRand=0$ otherwise. The final ratio is a product of the three ratios on the right-hand side raised to different powers. We can now state the following:
\begin{enumerate}
    \item  In anchoring bias, the weight put on AI prediction is high, i.e., $\beta >1$ and the corresponding ratio in \eqref{eq:exponentRatio} contributes more to the final ratio, whereas the weight on prior and data likelihood reduces.
    \item By contrast, in confirmation bias the weight on the prior is high, $\gamma > 1$, and the weight on the data and machine prediction reduces in comparison.
    \item Selective accessibility is a phenomena used to explain the mechanism of cognitive biases wherein the data that supports the decision-maker is selectively used as evidence, while the rest of the data is not considered. This distorts the data likelihood factor in \eqref{eq:exponent}. The direction of distortion $\alpha > 1$ or $\alpha < -1$ depends on the cognitive bias driven decision. 
    \item The weak evidence effect~\citep{fernbach2011weakEvidence}  suggests that when presented with weak evidence for a prediction, the decision-maker would tend to choose the opposite prediction. This effect is modeled with $\beta < -1$. 
\end{enumerate}
 
\noindent To focus our approach, we consider a particular cognitive bias --- anchoring bias, which is specific to the nature of human-AI collaboration and has been an issue in previous works~\citep{lai2019explanations, springer2018dice, bansal2020does}. In the next section, we summarise the findings about anchoring bias in the literature, explain proposed de-biasing technique and conduct an experiment to validate the technique.

\section{Anchoring bias}
\label{sec:anchoringBias}
AI-assisted decision-making tasks are prone to anchoring bias, where the human decision-maker is irrationally anchored to the AI-generated decision. %With anchoring bias, the decision-maker forms a skewed perception due to an anchor and fixates on a decision (early closure), thus, limiting their exploration of alternative hypotheses. 
The anchoring-and-adjustment heuristic, introduced by Tversky and Kahneman in~\cite{Tversky1124} and studied in~\citep{epley-anchoring, Lieder2018anchoringBias} suggests that after being anchored, humans tend to adjust insufficiently because adjustments are effortful and tend to stop once a plausible estimate is reached. Lieder et al.~\citep{Lieder2018anchoringBias} proposed the resource rational model of anchoring-and-adjustment which explains that the insufficient adjustment can be understood as a rational trade-off between time and accuracy.  
This is a consequence of the bounded rationality of humans~\citep{simon1956rational, simon1972theories}, which entails satisficing, that is, accepting  sub-optimal solutions that are  good enough, rather than optimizing solely for accuracy. Through user studies, Epley et al.~\citep{epley-anchoring} argue that cognitive load and time pressure are contributing factors behind insufficient adjustments. %Goddard et al~\cite{Goddard2012automation} also argue that complex tasks and time pressure may increase over-reliance on decision support.} 

Informed by the above works viewing anchoring bias as a problem of insufficient adjustment due to limited resources, we aim to mitigate the effect of anchoring bias in AI-assisted decision-making, using time as a resource. We use the term de-anchoring to denote the rational process of adjusting away from the anchor. With this goal in mind, we conducted two user studies on Amazon Mechanical Turk. Through the first user study (Experiment 1), we aim to understand the effect of different time allocations on anchoring bias and de-anchoring in an AI-assisted decision-making task. In Experiment 2, we use the knowledge obtained about the effect of time in Experiment 1 to design a time allocation strategy and test it on the experiment participants.  

We now describe Experiment 1 in detail.

\subsection{Experiment 1}  
\label{sec:prelimStudy}
In this study, we asked the participants to complete an AI-assisted binary prediction task consisting of a number of trials. Our aim is to learn the effect of allocating different amounts of time to different trials on participants with anchoring bias. 

To quantify anchoring bias and thereby the insufficiency of adjustments, we use the probability $\P(\perceived = \predict)$ that the human decision-maker agrees with the AI prediction $\predict$, which is easily measured. This measure can be motivated from the biased Bayesian model in \eqref{eq:exponent}. In the experiments, the model output $f(M)$ consists of only a predicted label $\predict$. In this case, \eqref{eq:exponent} becomes 
\begin{align}
    &\P(\perceivedRand=y|\data,\AIout) \propto \P(\data|\perceivedRand=y)^\alpha \P(\widehat{Y}=\predict|\perceivedRand=y)^\beta \prior(\perceivedRand=y)^\gamma.
    \label{eq:exponentYhat}
\end{align}
 Let us make the reasonable assumption that the decision-maker's decision $\perceivedRand$ positively correlates with the AI prediction $\widehat{Y}$, specifically that the ML model's probability $\P(\widehat{Y}=\predict|\perceivedRand=y)$ is larger when $y = \predict$ than when $y \neq \predict$. Then as the exponent $\beta$ increases, i.e., as anchoring bias strengthens, the likelihood that $y = \predict$ maximizes \eqref{eq:exponentYhat} and becomes the human decision $\perceived$ also increases. In the limit $\beta \to \infty$, we have agreement $\perceived = \predict$ with probability $1$. Conversely, for $\beta = 1$, the two other factors in \eqref{eq:exponentYhat} are weighed appropriately and the probability of agreement assumes a natural baseline value. We conclude that the probability of agreement is a measure of anchoring bias. It is also important to ensure that this measure is based on tasks where the human has reason to choose a different prediction. 

Thus, given the above relationship between anchoring bias and agreement probability (equivalently disagreement probability), we tested the following hypothesis to determine whether time is a useful resource in mitigating anchoring bias:
\begin{itemize}
    \item Hypothesis 1 (H1): Increasing the time allocated to a task alleviates anchoring bias, yielding a higher likelihood of sufficient adjustment away from the AI-generated decision when the decision-maker has the knowledge required to provide a different prediction.
\end{itemize}

\paragraph{\textbf{Participants.}} We recruited 47 participants from Amazon Mechanical Turk for Experiment 1, limiting the pool to subjects from within the United States with a prior task approval rating of at least 98\% and a minimum of 100 approved tasks. 10 participants were between Age 18 and 29, 26 between Age 30 and 39, 6 between Age 40 and 49, and 5 over Age 50. The average completion time for this user study was 27 minutes, and each participant received compensation of \$4.5 (roughly equals an hourly wage of \$10). The participants received a base pay of \$3.5 and a bonus of \$1 (to incentivize accuracy).   

\paragraph{\textbf{Task and AI model.}} We designed a performance prediction task wherein a participant was asked to predict whether a student would pass or fail a class, based on the student's characteristics, past performance, and some demographic information. The dataset for this task was obtained from the UCI Machine Learning Repository, published as the Student Performance Dataset~\citep{cortez2008using}. This dataset contains 1044 instances of students' class performances in 2 subjects (Mathematics and Portuguese), each described by 33 features. To prepare the dataset, we binarized the target labels (`pass', `fail'), split the dataset into training and test sets (70/30 split). To create our AI, we trained a logistic regression model on the standardized set of features from the training dataset. Based on the feature importance (logistic regression coefficients) assigned to each feature in the dataset, we retained the top 10 features for the experiments. These included --- mother's and father's education, mother's and father's jobs, hours spent studying weekly, interest in higher education, hours spent going out with friends weekly, number of absences in the school year, enrolment in extra educational support, and number of past failures in the class.

\paragraph{\textbf{Study procedure}}
\label{sec:studyProcedure1}Since we are interested in studying decision-makers' behavior when humans have prior knowledge and experience in the prediction task, we first trained our participants before collecting their decision data for analysis. The training section consists of 15 trials where the participant is first asked to provide their prediction based on the student data and is then shown the correct answer after attempting the task. These trials are the same for all participants and are sampled from the training set such that the predicted probability (of the predicted class) estimated by the AI model is distributed uniformly over the intervals $[0.5,0.6], (0.6,0.7], \cdots, (0.9 ,1]$. Taking predicted probability as a proxy for difficulty, this ensures that all levels of difficulty are represented in the task. To help accelerate participants' learning, we showed bar charts that display the distributions of the outcome across the feature values of each feature. These bar charts were not provided in the testing section to ensure stable performance throughout the testing section and to emulate a real-world setting.

To induce anchoring bias, the participant was informed at the start of the training section that the AI model was $85\%$ accurate (we carefully chose the training trials to ensure that the AI was indeed $85\%$ accurate over these trials), while the model's actual accuracy is $70.8\%$ over the entire training set and $66.5\%$ over the test set. Since our goal is to induce anchoring bias and the training time is short, we stated a high AI accuracy. Moreover, this disparity between stated accuracy ($85\%$) and true accuracy ($70.8\%$) is realistic if there is a distribution shift between the training and the test set, which would imply that the humans' trust in AI is misplaced. In addition to stating AI accuracy at the beginning, we informed the participants about the AI prediction for each training trial after they have attempted it so that they can learn about AI's performance first-hand.

The training section is followed by the testing section which consists of 36 trials sampled from the test set and was kept the same (in the same order) for all participants. In this section, the participants were asked to make a decision based on both the student data and AI prediction. They were also asked to describe their confidence level in their prediction as low, medium, or high.

To measure the de-anchoring effect of time, we included some trials where the AI is incorrect but the participants have the requisite knowledge to adjust away from the incorrect answer. That is, we included trials where the participants' accuracy would be lower when they are anchored to the AI prediction than when they are not. We call these trials --- \emph{probe trials}, which help us probe the effect of time on de-anchoring. On the flip side, we could not include too many of these trials because participants may lose their trust in the AI if exposed to many apparently incorrect AI decisions. To achieve this balance, we sampled 8 trials of medium difficulty where the AI prediction is accurate (predicted probability ranging from $0.6$ to $0.8$) and flip the AI prediction shown to the participants. The remaining trials, termed \emph{unmodified trials} are sampled randomly from the test set while maintaining a uniform distribution over the AI predicted probability (of the predicted class). Here, again, we use the predicted probability as a proxy for the difficulty of the task, as evaluated by the machine learning model. We note that the accuracy of the AI predictions \emph{shown} to the participants is $58.3\%$ which is far lower than the $85\%$ accuracy shown in the training section.

\paragraph{\textbf{Time allocation.}}
\label{sec:timeAllocation} To investigate the effect of time on the anchoring-and-adjustment heuristic in AI-assisted decision-making, we divide the testing section into four blocks for each participant based on the time allocation per trial. To select the allocated time intervals, we first conducted a shorter version of the same study to learn the amount of time needed to solve the student performance prediction task, which suggested that the time intervals of $10$s, $15$s, $20$s and $25$s captured the range from necessary to sufficient. Now, with these four time intervals, we divided the testing section into four blocks of 9 trials each, where the time allocated per trial in each block followed the sequence $[t_1, t_2, t_3, t_4]$ and for each participant this sequence was a random permutation of $[10,15,20,25]$. The participants were not allowed to move to the next trial till the allocated time ran out. Furthermore, each block was comprised of 2 probe trials and 7 unmodified trials, randomly ordered. Recall that each participant was provided the same set of trials in the same order.

 Now, with the controlled randomization of the time allocation, independent of the participant, their performance, and the sequence of the tasks, we are able to identify the effect of time on de-anchoring. It is possible that a participant that disagrees with the AI prediction often in the first half, is not anchored to the AI prediction in the latter half. Our study design allows us to average out such participant-specific effects, through the randomization of time allocation interval sequences across participants.

% \subsection{Results}
% \label{sec:results1}
% \begin{figure}%
%     \centering
%     \includegraphics[width=7cm]{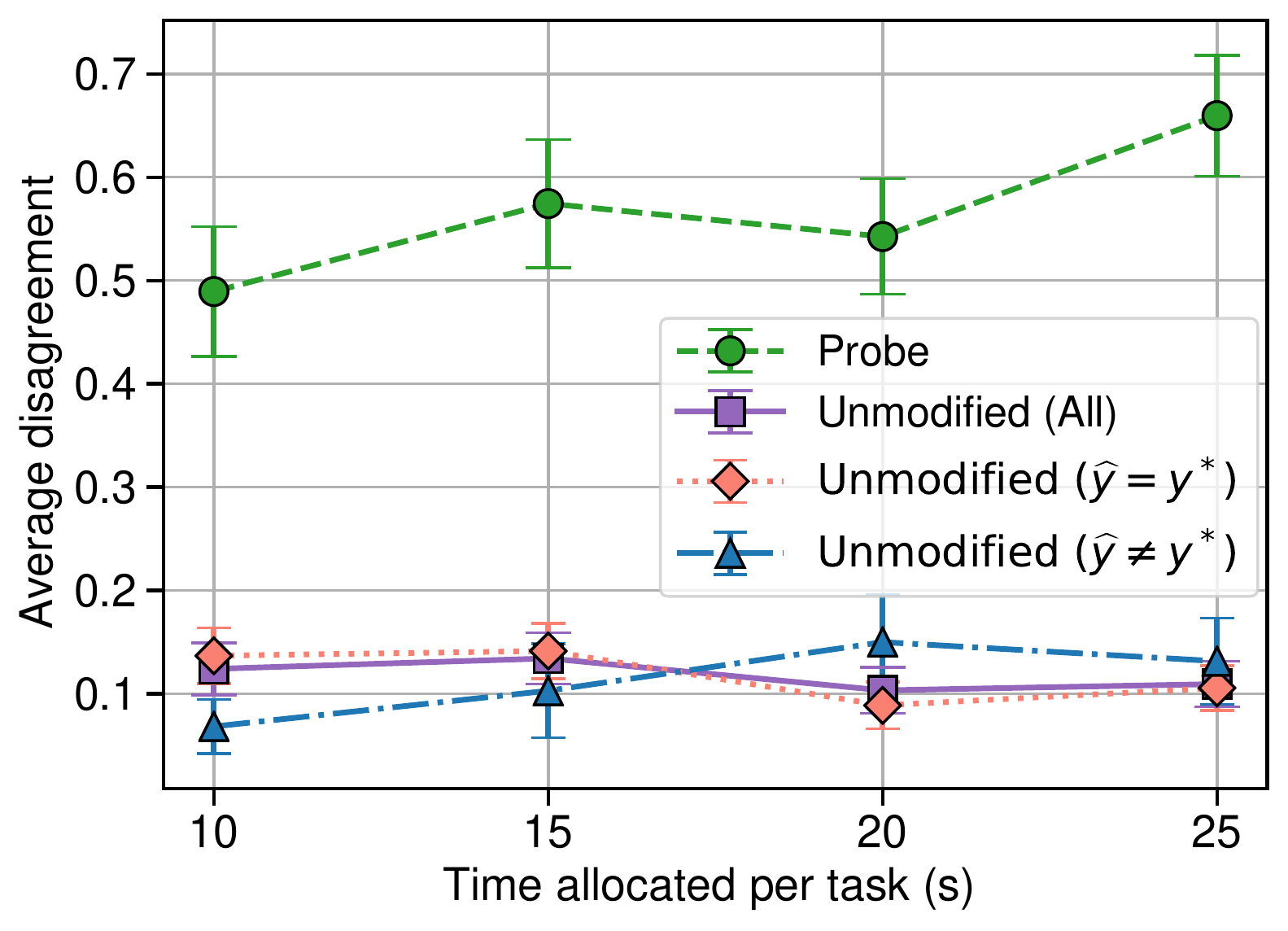} 
%     \caption{Average disagreement with the AI prediction for different time allocations in Experiment 1.}%
%     \label{fig:disagreePrelim}
% \end{figure}

\begin{figure}%
    \centering
    \subfloat[]{\includegraphics[width=6.5cm]{figures/disagreePrelim.pdf} } 
    \subfloat[]{\includegraphics[width=5.1cm]{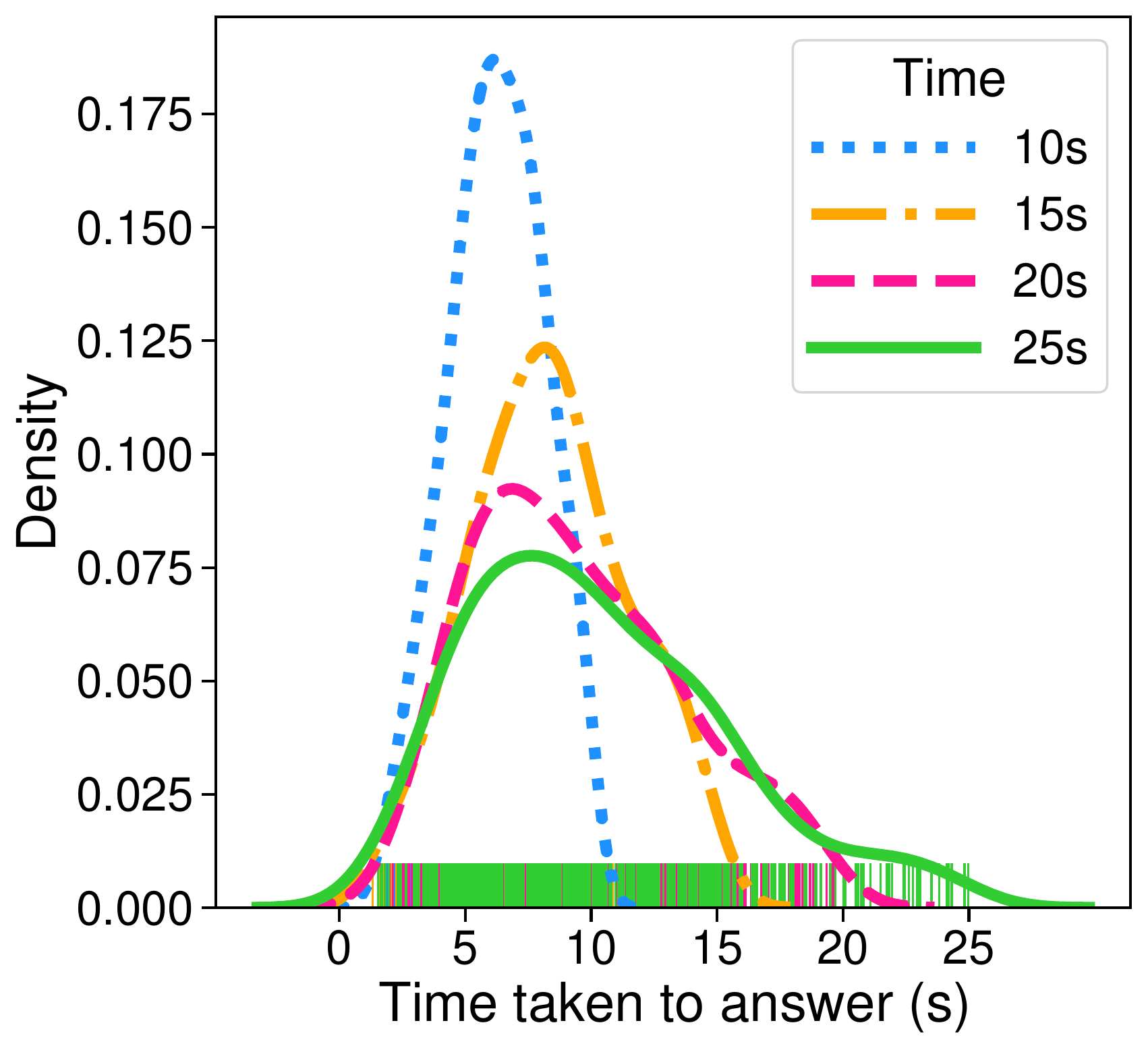} }
    \caption{Results of experiment 1. \textbf{(a)} Average disagreement with the AI prediction for different time allocations in experiment 1. \textbf{(b)} Distribution of time taken by the participants to provide their answer under the four different time conditions, \{10,15,20,25\} seconds in Experiment 1. For illustration purposes, we use kernel density estimation to estimate the probability density function shown. The actual answering time lies between 0 and 25 seconds.}
    \label{fig:timeUse}
    %\label{fig:AIconfDist}%
\end{figure}

The main results of Experiment 1 are illustrated in Figure~\ref{fig:timeUse}(a). We see that the probe trials served their intended purpose, since the average disagreement is much higher for probe trials compared to unmodified trials for all time allocations. This suggests that the participants had learned to make accurate predictions for this task, otherwise they would not be able to detect the AI's errors in the probe trials, more so in the $10$-second condition. 
We also observe that the likelihood of disagreement for unmodified trials is low (close to 0.1) for all time allocations. This suggests that the participants' knowledge level in this task is roughly similar to or less than that of the AI since the participants are unable to offer any extra knowledge in the unmodified trials.

\paragraph{\textbf{Anchoring-and-adjustment.}}  The results on the probe trials in Figure~\ref{fig:timeUse}(a) suggest that the participants' likelihood of sufficiently adjusting away from the incorrect AI prediction increased as the time allocated increased. This strengthens the argument that the anchoring-and-adjustment heuristic is a resource-rational trade-off between time and accuracy~\citep{Lieder2018anchoringBias}. 
Specifically, we observe that the average disagreement percentage in probe trials increased from $48\%$ in the $10$-second condition to $67\%$ in the $25$-second condition. We used the bootstrap method with 5000 re-samples to estimate the coefficient of a linear regression fit on average disagreement vs. time allocated for probe trials. This resulted in a significantly positive coefficient of $0.01$ (bootstrap $95\%$ confidence interval $[0.001,0.018])$.  This result is consistent with our Hypothesis 1 (H1) that increasing time for decision tasks alleviates anchoring bias. We note that the coefficient is small in value because the scales of the independent and dependent variables of the regression (time and average disagreement) have not been adjusted for the regression, so the coefficient of $0.01$ yields a $~0.15$ increase in average disagreement between the 10s and the 25s time condition.

\paragraph{\textbf{Time adherence.}} Figure~\ref{fig:timeUse}(b) suggests that the participants adhere reasonably to the four different time conditions used. We note that this time reflects the maximum time taken to click the radio button (in case of multiple clicks), but the participants may have spent more time thinking over their decision. In the survey at the end of the study, we asked the participants how often they used the entire time available to them in the trials, and obtained the following distribution of answers --- Frequently 15, Occasionally 24, Rarely 6, Never 2.

% \begin{figure}%
%     \centering
%     \includegraphics[width=7cm]{figures/timeUse.pdf}
%     \caption{Distribution of time taken by the participants to provide their answer under the four different time conditions, \{10,15,20,25\} seconds in Experiment 1. For illustration purposes, we use kernel density estimation to estimate the probability density function shown. The actual answering time lies between 0 and 25 seconds.}
%     \label{fig:timeUse}
% \end{figure}

\section{Optimal resource allocation in human-AI collaboration}

In Section~\ref{sec:anchoringBias}, we see that time is a useful resource for de-anchoring the decision-maker. More generally, there are many works that study de-biasing techniques to address the negative effects of cognitive biases. These de-biasing techniques require resources such as time, computation and explanation strategies. Thus, in this section, we model the problem of mitigating the effect of cognitive biases in the AI-assisted decision-making setting as a resource allocation problem, where our aim is to efficiently use the resources available and improve human-AI collaboration accuracy.
\subsection{Resource allocation problem}
\label{sec:resourceAllocationProblem}
 From Experiment 1 in Section~\ref{sec:prelimStudy}, we learnt that given more time, the decision-maker is more likely to adjust away from the anchor (AI decision) if the decision-maker has reason to believe that the correct answer is different. This is shown by change in their probability of agreement with the AI prediction, denoted by $\pagree{} = \P(\perceived=\predict)$.  The results of Experiment 1 indicate that, ideally, decision-makers should be provided with ample time for each decision. However, in practice, given a finite resource budget $\tottime{}$, we also have the constraint $\sum_{i=1}^\num \tottime{i} = \tottime{}$. Thus, we formulate a resource allocation problem that captures the trade-off between time and accuracy. More generally, this problem suggests a framework for optimizing \emph{human-AI} team performance using constrained resources to de-bias the decision-maker.

In our setup, the human decision-maker has to provide a final decision $\perceived_i$ for $\num$ total trials with AI assistance, specifically in the form of a predicted label $\predict_i$. The objective is to maximize the average accuracy over the trials, denoted by $R$, of the human-AI collaboration: $\E[R] = \dfrac{1}{\num} \sum_{i=1}^\num \E[R_i]$,
where $R_i$ is an indicator of human-AI correctness in trial $i$. 

We first relate collaborative accuracy $\E[R]$ to the anchoring-and-adjustment heuristic. Intuitively, if we know the AI to be incorrect in a given trial, we should aim to facilitate adjustment away from the anchor as much as possible, whereas if AI is known to be correct, then anchoring bias is actually beneficial. Based on this intuition, $E[R_i]$ can be rewritten by conditioning on AI correctness/incorrectness as follows: 
\begin{align}
    \E[R_i] &= 
    \underbrace{\P(\perceived_i = \predict_i\given \predict_i = \truth_i)}_{\pagreer{i}} \P(\predict_i = \truth_i)  + \bigl(1-\underbrace{\P(\perceived_i = \predict_i\given \predict_i \neq \truth_i)}_{\pagreew{i}}\bigr)\left(1-\P(\predict_i=\truth_i)\right). %\Bigg),
    \label{eq:expReward}
\end{align}
We see therefore that human-AI correctness depends on the probability of agreement $\pagreer{i}$ conditioned on AI being correct and the probability of agreement $\pagreew{i}$ conditioned on AI being incorrect. Recalling from Section~\ref{sec:anchoringBias} the link established between agreement probability and anchoring bias, \eqref{eq:expReward} shows the effect of anchoring bias on human-AI accuracy. Specifically in the case of \eqref{eq:expReward}, the effect is through the two conditional agreement probabilities $\pagreer{i}$ and $\pagreew{i}$.

We consider time allocation strategies to modify agreement probabilities and thus improve collaborative accuracy, based on the relationship established in Experiment 1. 

We denote the time %strategy 
used in trial $i$ as $\tottime{i}$, which impacts correctness $R_i$ \eqref{eq:expReward} as follows: 
\begin{equation}\label{eq:RiTi}
    \E[R_i \given \tottime{i}] = \pagreer{i}(\tottime{i}) \P(\predict_i = \truth_i) + \pagreew{i}(\tottime{i}) \left(1 - \P(\predict_i = \truth_i)\right).
\end{equation}
The allocation of time affects only the human decision-maker, making the agreement probabilities functions of $\tottime{i}$, whereas the probability of AI correctness is unaffected. The resource allocation problem would then be to maximize the average of \eqref{eq:RiTi} over trials subject to the budget constraint $\sum_{i=1}^n T_i=T$. 

\begin{figure}%
    \centering
    \includegraphics[width=6cm]{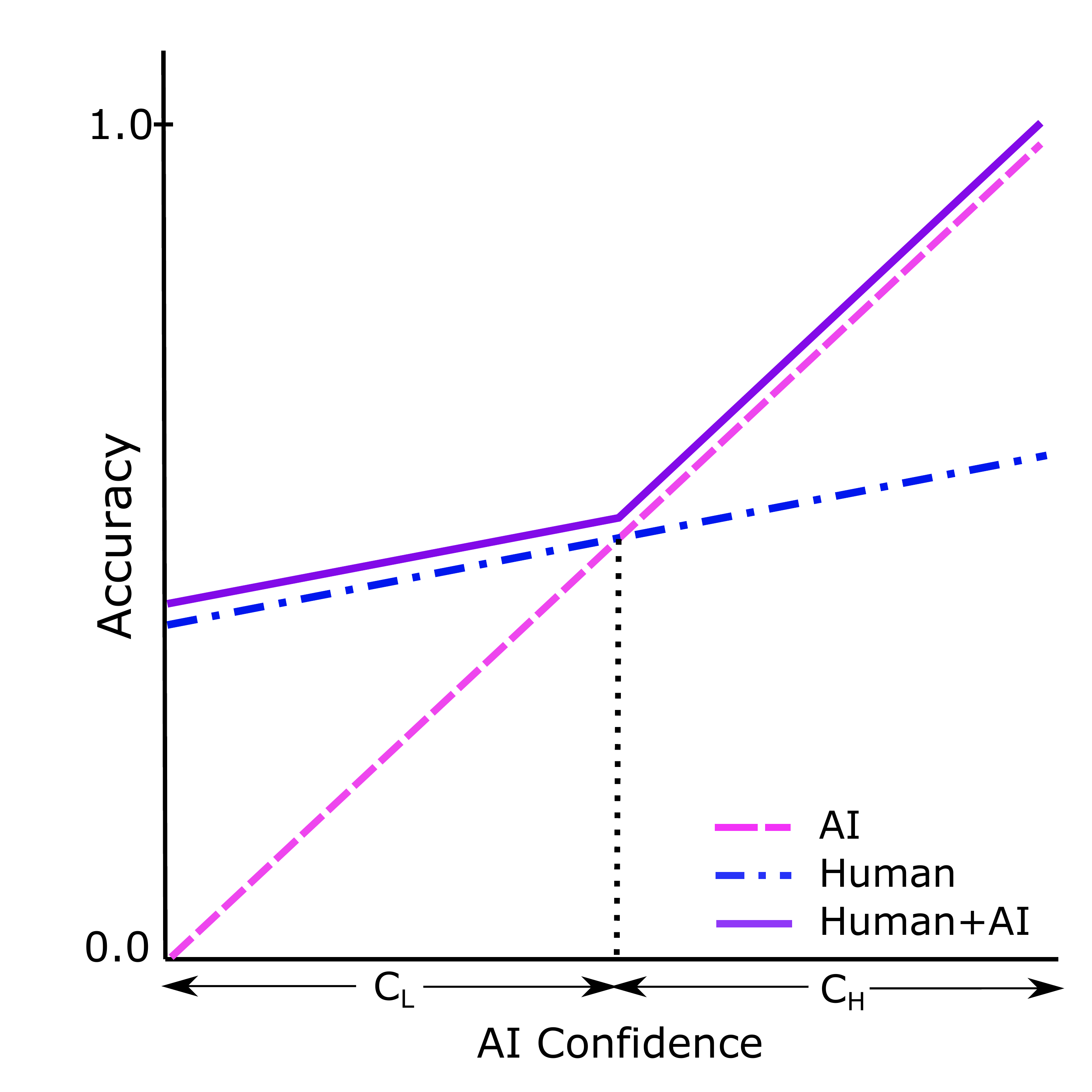} 
    \caption{An ideal case for human-AI collaboration, where (1) we correctly identify the set of tasks with low and high AI confidence, (2) the AI accuracy is perfectly correlated with its confidence, (3) human accuracy is higher than AI in the low confidence region, $\Cset_L$, and lower than AI in the high confidence region $\Cset_H$. }%
    \label{fig:idealhumanAI}
\end{figure}

%For simplicity, we assume that the weight factor in \eqref{eq:stateUpdate} is 1. 
The challenge with formulation \eqref{eq:RiTi} is that it requires
identifying the true probability of AI correctness, %in \eqref{eq:expReward} 
which is a non-trivial task~\citep{guo2017calibration}. Instead, we operate under the more realistic assumption that the AI model can estimate its probability of correctness from the class probabilities that it predicts (as provided for example by a %Using the predicted probability in 
logistic regression model). We refer to this estimate as \emph{AI confidence} and denote it as $\Cest_i$. We may then consider a decomposition of human-AI correctness as in \eqref{eq:expReward}, \eqref{eq:RiTi} but conditioned on $\Cest_i$. In keeping with the two cases in \eqref{eq:expReward}, \eqref{eq:RiTi} and to simplify the allocation strategy, we binarize $\Cest_i$ into two intervals, low confidence $\Cest_i \in \Cset_L$, and high confidence $\Cest_i \in \Cset_H$. The time allocated is then $T_i(\Cest_i) = t_L$ for $\Cest_i \in \Cset_L$ and $T_i(\Cest_i) = t_H$ for $\Cest_i \in \Cset_H$. Thus we have 
\begin{align}\label{eq:RiCi}
    \E[R_i] &= \P(\Cest_i \in \Cset_L) \E[R_i \given \Cest_i \in \Cset_L, \tottime{i} = t_L] + \P(\Cest_i \in \Cset_H) \E[R_i \given \Cest_i \in \Cset_H, \tottime{i} = t_H].
\end{align}
The quantities $\E[R_i \given \Cest_i \in \Cset, \tottime{i}]$, $\Cset = \Cset_L, \Cset_H$, are not pure agreement probabilities as in \eqref{eq:RiTi} because the low/high-confidence events $\Cest_i \in \Cset_L$, $\Cest_i \in \Cset_H$ generally differ from the correctness/incorrectness events $\predict_i = \truth_i$, $\predict_i \neq \truth_i$. Nevertheless, since we expect these events to be correlated, $\E[R_i \given \Cest_i \in \Cset, \tottime{i}]$ is related to the agreement probabilities in \eqref{eq:RiTi}.

Figure~\ref{fig:idealhumanAI} presents an ideal scenario that one hopes to attain in \eqref{eq:RiCi}. In presence of anchoring bias, our aim is to achieve the human-AI team accuracy shown. This approach capitalises on human expertise where AI accuracy is low. Specifically, by giving human decision-makers more time, we encourage them to rely on their own knowledge (de-anchor from the AI prediction) when the AI is less confident, $\Cest_i \in \Cset_L$. Usage of more time in low AI confidence tasks, implies less time in tasks where AI is more confident, $\Cest_i \in \Cset_H$, where anchoring bias has lower negative effects and is even beneficial. Thus, this two-level AI confidence based time allocation policy allows us to mitigate the negative effects of anchoring bias and achieve the ``best of both worlds'', as illustrated in Figure~\ref{fig:idealhumanAI}. 

We now formally write the assumption under which the optimal time allocation policy is straightforward to see. 
\par \noindent
\textbf{Assumption 1.} For any $t_1, t_2 \in \mathbb{R}^{+}$, if $t_1 < t_2$, then 
\begin{align}
\begin{split}
     & \E[R_i \given \Cest_i \in \Cset_L, \tottime{i} = t_1] \leq \E[R_i \given \Cest_i \in \Cset_L, \tottime{i} = t_2],\,\,\textnormal{and}\\
    & \E[R_i \given \Cest_i \in \Cset_H, \tottime{i} = t_1] \geq \E[R_i \given \Cest_i \in \Cset_H, \tottime{i} = t_2]. 
    \label{eq:assumption1}
\end{split}
\end{align}
We provide explanation for Assumption 1 \eqref{eq:assumption1} in Appendix A. Under Assumption~1, the optimal strategy is to maximize time for low-AI-confidence trials and minimize time for high-confidence trials, as is stated formally below.
\begin{proposition}
Consider the AI-assisted decision-making setup discussed in this work with $\num$ trials where the total time available is $T$. Suppose Assumption 1, stated in~\eqref{eq:assumption1}, holds true for human-AI accuracy. Then, the optimal confidence-based allocation is as follows, 
\begin{align}
  \tottime{i} =
\left\{
	\begin{array}{ll}
		t_H =  t_{\min} & \mbox{if } \;\;\Cest_i \in \Cset_H \\
		t_L =  t_{\max} & \mbox{if } \;\;\Cest_i \in \Cset_L,
	\end{array}
    \label{eq:confBased}
\right.
\end{align}
 where $t_{\min}$ is the minimum allowable time, and $ t_{\max}$ is the corresponding maximum time such that $t_{\max} \P(\Cest_i \in \Cset_L) + t_{\min} \P(\Cest_i \in \Cset_H) = \frac{T}{N}$. 
\label{prop:optimal}
\end{proposition}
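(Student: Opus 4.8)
The plan is to reduce the proposition to a two-variable optimization and then settle it with a short exchange argument driven by the monotonicity in Assumption~1. Under a confidence-based policy every trial in the low-confidence bucket receives the same time $t_L$ and every trial in the high-confidence bucket the same time $t_H$, so by \eqref{eq:RiCi} the per-trial objective depends only on the pair $(t_L, t_H)$. Writing $p_L = \P(\Cest_i \in \Cset_L)$, $p_H = \P(\Cest_i \in \Cset_H) = 1 - p_L$, and
\[
g_L(t) = \E[R_i \given \Cest_i \in \Cset_L, \tottime{i} = t], \qquad g_H(t) = \E[R_i \given \Cest_i \in \Cset_H, \tottime{i} = t],
\]
the average accuracy to be maximized is $\E[R] = p_L\, g_L(t_L) + p_H\, g_H(t_H)$. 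Since in expectation a fraction $p_L$ of the $\num$ trials fall in $\Cset_L$ and $p_H$ in $\Cset_H$, the budget $\sum_{i} \tottime{i} = T$ becomes $p_L t_L + p_H t_H = T/\num$, with each time respecting the floor $t_L, t_H \ge t_{\min}$. By the two inequalities in \eqref{eq:assumption1}, $g_L$ is nondecreasing and $g_H$ is nonincreasing in $t$.

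First I would establish that the budget should be binding: if $p_L t_L + p_H t_H < T/\num$ at a candidate solution, raising $t_L$ until the budget is tight does not decrease $g_L(t_L)$ (as $g_L$ is nondecreasing) and leaves $g_H(t_H)$ untouched, so we may restrict to the equality constraint. Next comes the key exchange step. Take any feasible $(t_L, t_H)$ with $t_H > t_{\min}$ and move time out of the high-confidence bucket into the low-confidence bucket: set $t_H' = t_{\min}$ and $t_L' = t_L + \frac{p_H}{p_L}(t_H - t_{\min})$, which keeps $p_L t_L' + p_H t_H' = p_L t_L + p_H t_H$ and hence preserves feasibility while respecting the floor since $t_L' \ge t_L \ge t_{\min}$. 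Both factors of the objective then move favorably: $g_H(t_{\min}) \ge g_H(t_H)$ because $t_{\min} < t_H$ and $g_H$ is nonincreasing, and $g_L(t_L') \ge g_L(t_L)$ because $t_L' \ge t_L$ and $g_L$ is nondecreasing. Therefore $p_L g_L(t_L') + p_H g_H(t_{\min}) \ge p_L g_L(t_L) + p_H g_H(t_H)$, so reallocating to $t_H = t_{\min}$ never decreases the objective.

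This shows an optimal solution exists with $t_H = t_{\min}$. With $t_H$ fixed at $t_{\min}$, the binding budget forces $t_L = t_{\max}$, the unique value satisfying $t_{\max} p_L + t_{\min} p_H = T/\num$, which is exactly the allocation claimed in \eqref{eq:confBased}; feasibility of this solution ($t_{\max} \ge t_{\min}$) is equivalent to the natural budget condition $T/\num \ge t_{\min}$. I expect the only delicate points to be bookkeeping rather than conceptual: verifying that the exchange preserves the budget equality and the minimum-time floor simultaneously, and justifying that the budget is binding at the optimum. The monotonicity supplied by Assumption~1 does all the real work, since decreasing $t_H$ helps the objective twice over---directly through the nonincreasing $g_H$ and indirectly by freeing budget for the nondecreasing $g_L$---which is precisely why no interior trade-off can beat this extreme allocation.
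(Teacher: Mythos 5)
Your proof is correct, and it is in fact more complete than what the paper itself supplies. The paper never writes out a proof of Proposition~\ref{prop:optimal}: it asserts that the allocation follows from Assumption~1 and then, in the appendix, proves only the weaker Corollary~\ref{cor:constRand}, which compares the confidence-based policy against the two specific baselines (constant and random time) by directly conditioning on $\Cest_i$ and invoking the two monotonicity inequalities term by term. Your argument instead establishes optimality over the whole class of confidence-based allocations $(t_L, t_H)$ satisfying the budget: the reduction of \eqref{eq:RiCi} to the two-variable objective $p_L g_L(t_L) + p_H g_H(t_H)$, the observation that the budget constraint should bind, and the exchange step $t_H' = t_{\min}$, $t_L' = t_L + \frac{p_H}{p_L}(t_H - t_{\min})$ that preserves $p_L t_L' + p_H t_H' = T/\num$ while improving (weakly) both terms, together give a genuine optimality proof rather than a pairwise comparison. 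What your route buys is a clean argument that no interior trade-off can beat the extreme allocation; what the paper's route buys is only the empirical comparison it actually needs for Experiment~2. Two small points worth flagging in a final write-up: the exchange requires $p_L > 0$ (the degenerate case $p_L = 0$ is trivial), and the budget constraint $p_L t_L + p_H t_H = T/\num$ is an in-expectation version of $\sum_i \tottime{i} = T$ (exact only if the realized fraction of low-confidence trials equals $p_L$, as it does by construction in the experiment); neither affects the substance.
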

Proposition~\ref{prop:optimal} gives us the optimal time allocation strategy by efficiently allocating more time for adjusting away from the anchor in the tasks that yield a lower probability of accuracy if the human is anchored to the AI predictions. We note that, although in an ideal scenario as shown in Figure~\ref{fig:idealhumanAI}, we should set $t_{\min}=0$, in real world implementation $\Cest_i$ is an approximation of the true confidence, and hence, it would be helpful to have human oversight with $t_{\min} > 0$ in case the AI confidence is poorly calibrated.

To further understand the optimality of the confidence-based time allocation strategy, we compare it with two baseline strategies that obey the same resource constraint, namely, Constant time and Random time strategies, defined as --\begin{itemize}
    \item Constant time: For all $i$, $\tottime{i} = \frac{\tottime{}}{\num}$.
    \item Random time : Out of the $\num$ trials, $\num\times\P(\Cest_i \in \Cset_L)$ trials are selected randomly and allocated time $t_L$. The remaining trials are allocated time $t_H$.
\end{itemize}
Constant time is the most natural baseline allocation, while Random time assigns the same values $t_L$ and $t_H$ as the confidence-based policy but does so at random. Both are evaluated in the experiment described in Section~\ref{sec:mainStudy}.

\subsection{Experiment 2: Dynamic time allocation for human-AI collaboration}
\label{sec:mainStudy}
In this experiment, we implement our confidence-based time allocation strategy for human-AI collaboration in a user study deployed on Amazon Mechanical Turk. Based on the results of Experiment 1 shown in Figure~\ref{fig:timeUse}(a), we assign $t_L = 25s$ and $t_H = 10s$. 

In addition, we conjecture that giving the decision-maker the reasoning behind the time allocation, that is, informing them about AI confidence and then allocating time accordingly, would help improve the collaboration further. This conjecture is supported by findings in~\citep{Chambon2020choosing}, where the authors observe that choice tips the balance of learning: for the same action and outcome, the brain learns differently and more quickly from free choices than forced ones. Thus, providing valid reasons for time allocation would help the decision-maker make an active choice and hence, learn to collaborate with AI better.

In this experiment, we test the following hypotheses.
\begin{itemize}
    \item H2: Anchoring bias has a negative effect on human-AI collaborative decision-making accuracy when AI is incorrect.
    \item H3: If the human decision-maker has complementary knowledge then allocating more time can help them sufficiently adjust away from the AI prediction.
    \item H4 : Confidence-based time allocation yields better performance than Human alone and AI alone. 
    \item  H5: Confidence-based time allocation yields better human-AI team performance than constant time and random time allocations.
    \item H6: Confidence-based time allocation with explanation yields better human-AI team performance than the other conditions.
\end{itemize}
We now describe the different components of Experiment 2 in detail. 

 \paragraph{\textbf{Participants.}}
In this study, 479 participants were recruited in the same manner as described in Section~\ref{sec:prelimStudy}. 83 participants were between ages 18 and 29, 209 between ages 30 and 39, 117 between ages 40 and 49, and 70 over age 50. The average completion time for this user study was 30 minutes, and participants received compensation of \$5.125 on average (roughly equals an hourly wage of \$10.25). The participants received an average base pay of \$4.125 and bonus of \$1 (to incentivize accuracy).   

\paragraph{\textbf{Task and AI model.} } The binary prediction task in this study is the same as the student performance prediction task used before. In this experiment, our goal is to induce optimal human-AI collaboration under the assumptions illustrated in Figure~\ref{fig:idealhumanAI}. In real-world human-AI collaborations, it is not uncommon for the decision-maker to have some domain expertise or complementary knowledge that the AI does not, especially in fields where there is not enough data such as social policy-making and design. To emulate this situation where the participants have complementary knowledge, we reduced the information available to the AI, given the unavailability of human experts and the limited training time in our experiment. We train the assisting AI model over 7 features, while the participants have access to 3 more features, namely, hours spent studying weekly, hours spent going out with friends weekly, and enrollment in extra educational support. These 3 features were the second to fourth most important ones as deemed by a full model.

To implement the confidence-based time allocation strategy, we had to identify trials belonging to classes $\Cset_L$ and $ \Cset_H$. Ideally, for this we require a machine learning algorithm that can calibrate its confidence correctly. As discussed in Section~\ref{sec:resourceAllocationProblem}, we use the AI's predicted probability $\Cest_i$ (termed as AI confidence) and choose the threshold for $\Cset_H$ as $\Cest_i \geq 0.75$. This study has $40$ questions in the testing section, from which $20$ belong to $\Cset_L$ and $20$ belong to $\Cset_H$.

\paragraph{\textbf{Study procedure.}} As in Experiment 1, this user study has two sections, the training section and the testing section. The training section is exactly the same as before where the participants are trained over $15$ examples selected from the training dataset. To induce anchoring bias, as in Experiment 1, we reinforce that the AI predictions are $85\%$ accurate in the training section. 

The testing section has 40 trials, which are sampled randomly from the test set such that the associated predicted probability values (of the predicted class) estimated by the machine learning algorithm are distributed uniformly. While the set of trials in the testing section is fixed for all participants, the order they were presented in was varied randomly. 

To test hypotheses H2, H3, H4, H5 and H6, we randomly assigned each participant to one of five groups:  
\begin{enumerate}
    \item \textbf{Human only}: In this group, the participants were asked to provide their prediction without the help of the AI prediction. The time allocation for each trial in the testing section is fixed at $25$ seconds. This time is judged to be sufficient for humans to make a prediction on their own, based on the results of Experiment 1 (for example the time usage distributions in Figure~\ref{fig:timeUse}). 
    \item \textbf{Constant time}: In this group, the participants were asked to provide their prediction with the help of the AI prediction. The time allocation for each trial in the testing section is fixed as $\frac{t_L + t_H}{2} = 17.5$ seconds. We rounded this to $18$ seconds when reporting it to the participants. 
    \item \textbf{Random time}: This group has all factors the same as the constant time group except for the time allocation. For each participant, the time allocation for each trial is chosen uniformly at random from the set $\{10, 25\}$ such that the average time allocated per trial is $17.5$ seconds.
    \item \textbf{Confidence-based time}: This is our treatment group, where we assign time according to confidence-based time allocation, $t_L = 25$ seconds and $t_H= 10$ seconds, described in Section~\ref{sec:resourceAllocationProblem}. 
    \item  \textbf{Confidence-based time with explanation}: This is our second treatment group, where in addition to confidence-based time allocation, we provide the AI confidence (``low'' or ``high'') corresponding to each trial. 
\end{enumerate}
Out of the 479 participants, $95$ were in \ho, $109$ were in \const{}, $95$ were in \random{}, $85$ were in \conf{} and $96$ were in \confExp{}. In the groups where participants switch between $10$-second and $25$-second conditions the accuracy would likely be affected by rapid switching between the two time conditions. Hence, we created blocks of 5 trials with the same time allocation for both groups. The complete testing section contained 8 such blocks. This concludes the description of Experiment 2.
\subsection{Results} 
\label{sec:results2}

\begin{figure}%
    \centering
    \subfloat[Average accuracy]{\includegraphics[width=12.2cm]{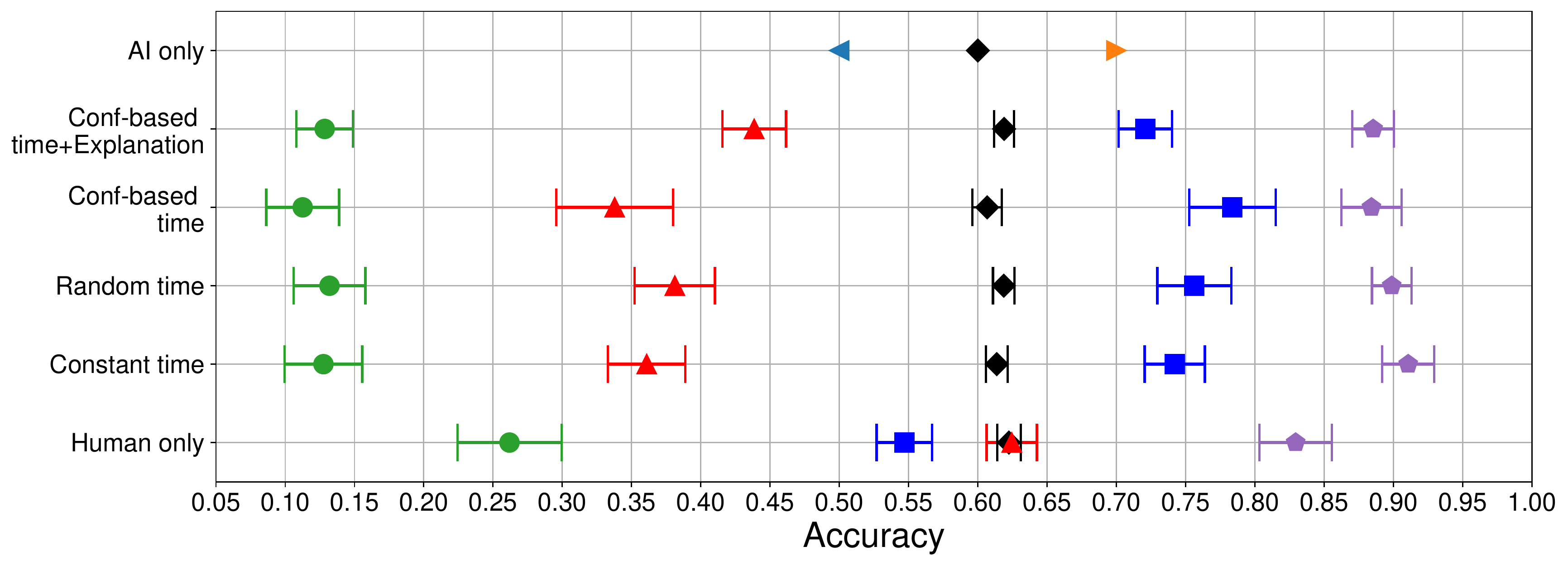} }\\%
    \subfloat[Average agreement with AI]{\includegraphics[width=12.2cm]{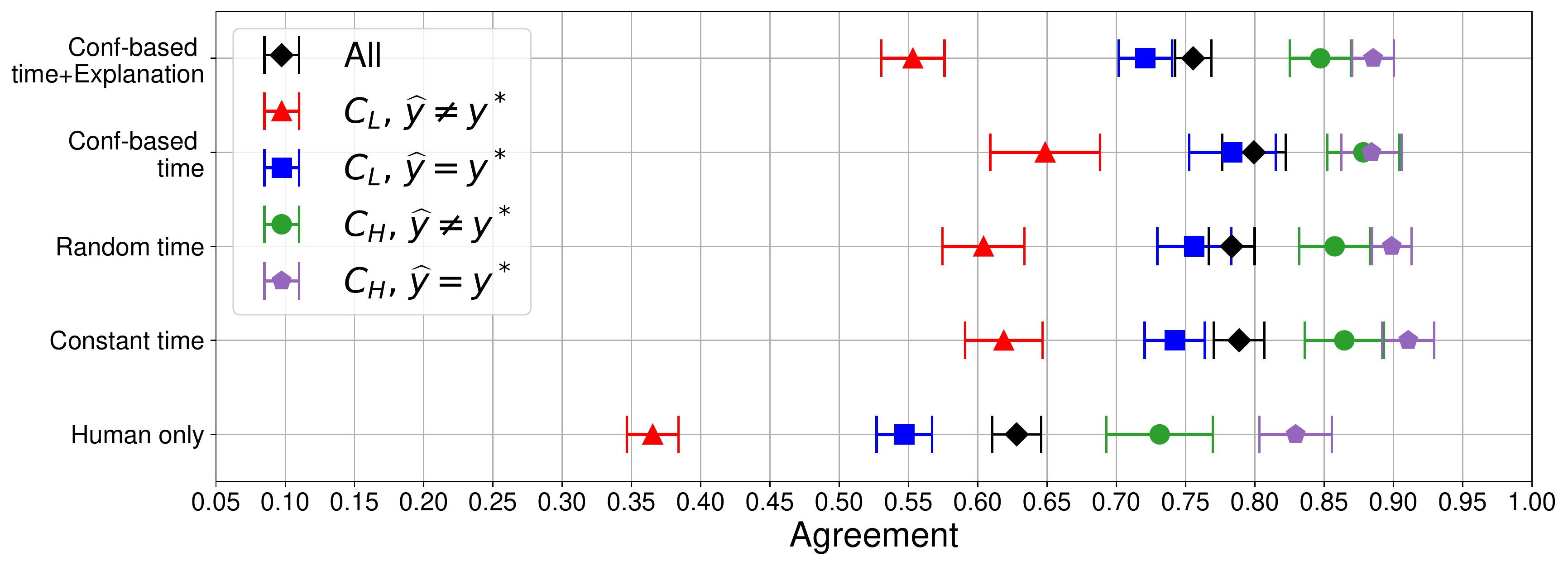} }
    \caption{Average accuracy and agreement ratio of participants in Experiment 2 across the four different conditions, marked on the y-axis. We note that the error bars in (a) for 'All' trials (black diamonds) are smaller than the marker size. }
    \label{fig:AIconfDist}%
\end{figure}
Figure~\ref{fig:AIconfDist} shows that our effort to create a scenario where the AI knowledge is complementary to human knowledge is successful because the AI only and \ho{} conditions have similar overall accuracy (around $60\%$, black diamonds), and yet humans only agreed with the AI in $62.3\%$ of the trials. Moreover, on trials where AI is incorrect, \ho{} has accuracy of $61.8\%$ on trials in $\Cset_L$, and $29.8\%$ on trials in $\Cset_H$. Thus, the participants showed more complementary knowledge in trials in $\Cset_L$ compared to $\Cset_H$.

Given this successful setup of complementary knowledge between humans and AI, there is good potential for the human-AI partnership groups, especially the \conf{} group and the \confExp{} group, to outperform the AI only or \ho{} groups (H4). In Figure~\ref{fig:AIconfDist}(a), we see that the mean accuracy of the human-AI team is $61\%$ in \conf{} and $61.9\%$ in \confExp{} while the accuracy of \ho{} is $61.9\%$ and the accuracy of the AI model is $60\%$. Thus, regarding H4, the results suggest that the accuracy in \conf{} is greater than AI alone $(p=0.06,t(183) = 1.52)$, whereas they do not provide sufficient evidence for \conf{} being better than \ho{} $(p=0.58,t(92) = -0.21)$. Similarly, regarding H6, the results suggest that the accuracy in \confExp{} is better than AI alone $(p=0.004,t(194) = 2.66)$, whereas for \ho{} the results are not statistically significant $(p=0.5,t(189) = -0.02)$.

However, we see that anchoring bias affected overall team performance negatively when the AI is incorrect (H2). Figure~\ref{fig:AIconfDist}(b) shows evidence of anchoring, the agreement percentage in the \ho{} group is much lower than those in the collaborative conditions $(p<0.001, t(184)=6.73)$. When the AI was incorrect (red triangles and green circles), this anchoring bias clearly reduced team accuracy when compared to the \ho{} accuracy $(p<0.001, t(370)=-6.68)$. Although, it is important to note that the \ho{} group received longer time ($25$s) than the collaborative conditions on average. Nevertheless, if we just compare \ho{} and \conf{} within the low confidence trials (red triangles), where both were assigned the same amount of time($25$s), we observe similar disparity in agreement percentages $(p<0.001, t(92)=4.97)$ and accuracy $(p<0.001, t(92)=-4.74)$. Hence, the results are consistent with H2.

Regarding H3, we see that while \conf{} alone did not lead to sufficient adjustment away from the AI when it was incorrect, \confExp{} showed significant reduction in anchoring bias in the low confidence trials (red triangles) compared to the other conditions  $(p=0.003 , t(383)= 2.70)$, which suggests that giving people more time along with an explanation for the time helped them adjust away from the anchor sufficiently, in these trials (H3). This de-anchoring also led to higher accuracy in these trials(red triangles) for \confExp{} ($43.8\%$) when compared to the other three collaborative conditions with \random{} at $36.2\%$, \const at $36.4\%$ and \conf{} at $37.5\%$. Note that the set of conditions chosen in our experiment does not allow us to separately quantify the effect of the time-based allocation strategy and the confidence-based explanation; we discuss this in Section~\ref{sec:discussion}.

Next, we examine the differences between the four collaborative groups. Figure~\ref{fig:AIconfDist}(a) shows that the average accuracy over all trials (black diamonds) is highest for \confExp{} at $61.9\%$ with \conf{} at $61\%$, \random{} at $61.1\%$ and \const{} at $61.5\%$. Regarding H5, we see that \conf{} does not have significantly different accuracy from the other collaborative groups. Finally, regarding H6, we observe that \confExp{} has the highest accuracy, although the effect is not statistically significant $(p=0.19, t(383)=0.84)$. We note that the outcomes in all collaborative conditions are similar in all trials except trials where AI is incorrect and has low confidence, and in these trials our treatment group has significantly higher accuracy. This implies that in settings prone to over-reliance on AI, \confExp{} helps improve human-AI team performance. 

The reason that the overall accuracy of \conf{} is not significantly better than the other two collaborative conditions is likely because of the relatively low accuracy and low agreement percentage in trials in $\Cset_H$ (green circles, purple pentagons). Based on the results of Experiment 1, we expected that the agreement percentage for the $10$-second trials would be high and since these align with the high AI confidence trials for \conf{}, we expected these trials to have a high agreement percentage and hence high accuracy. Instead, we observed that \conf{} has low agreement percentage $(84\%)$ in $\Cset_H$, compared to \random{} $(87.9\%)$, and \const{}$(88.1\%)$, both having an average time allocation of $17.5$ seconds. This lower agreement percentage translates into lower accuracy ($86\%$) when AI is correct (purple pentagons). In the next section, we discuss how this points to possible distrust of AI in these high confidence trials and its implications. For \confExp{} we observe that the participants in this group are able to de-anchor from incorrect low confidence AI predictions, to give higher mean accuracy than other collaborative conditions, albeit the difference is not statistically significant. 

\section{Discussion}
\label{sec:discussion}
\paragraph{\textbf{Lessons learned.}} We now discuss some of the lessons learned from the results obtained in Experiment 2. As noted in Section~\ref{sec:results2}, we see that \conf{} has a low agreement rate on trials in $\Cset_H$ where the time allocated is $10$ seconds and the AI prediction is $70\%$ accurate. Moreover, we see that the agreement rate is lower than \ho{} and \const{} on trials in $\Cset_H$, where the AI prediction is correct as well as where the AI prediction is incorrect. This behavior suggests that the participants in \conf{} may have grown to distrust the AI, as they disagreed more with the AI on average and spent more time on the trials where they disagreed. The distrust may be due to \conf{} assigning longer times ($25s$) only to low-AI-confidence trials, perhaps giving the impression that the AI is worse than it really is. However, these effects are reduced by providing an explanation for the time allocation in \confExp{}. Our observations highlight the importance of accounting for human behaviour in such collaborative decision-making tasks. 

Another insight gained from Experiment 2 is that the model should take into account the sequentiality of decision-making where the decision-maker continues to learn and build their perception of the AI as the task progresses, based on their interaction with the AI. Dynamic Markov models have been studied previously in the context of human decision-making~\citep{busemeyer2020comparison, Lieder2018anchoringBias}. We believe that studying dynamic cognitive models that are cognizant of the changing interaction between the human and the AI model would help create more informed policies for human-AI collaboration.

\paragraph{\textbf{Limitations.}} One limitation of our study is that our participants are not experts in student assessment. To mitigate this problem we first trained the participants in the task and showed them the statistics of the problem domain. We also showed more features to the human users, compared to the AI, to give them complementary knowledge. The fact that human-only accuracy in Experiment 2 is roughly the same as the AI-only accuracy suggests that these domain-knowledge enhancement measures were effective. Secondly, we proposed a time-based strategy and conducted Experiment 1 to validate our hypothesis and select the appropriate time durations $(10s, 25s)$ for our second experiment. Due to limited resources, we did not extend our search space beyond four settings -- $(10s, 15s, 20s, 25s)$. Although it is desirable to conduct the experiment with real experts, this can be extremely expensive. Our approach can be considered as "human grounded evaluation"~\cite{doshi2017towards}, a valid approach by using lay people as a "proxy" to understand the general behavioral patterns. We used a non-critical decision-making task where the participants would not be held responsible for the consequences of their decisions. This problem was mitigated by introducing an outcome-based bonus reward which motivates optimal decision-making. Our work considers the effect of our time allocation strategy with and without the confidence-based explanation through the treatment groups in experiment 2. While this helps us investigate the benefits of the time allocation strategy, we cannot separate out the independent effect of the confidence-based explanation strategy. Lastly, our work focuses on a single decision-making task. Additional work is needed to examine if the effects we observe generalize across domains and settings. However, prior research provides ample evidence that even experts making critical decisions resort to heuristic thinking, which suggests that our results will generalize broadly.

\paragraph{\textbf{Conclusions.}} In this work, we foreground the role of cognitive biases
in the human-AI collaborative decision-making setting. Through literature in cognitive science and psychology, we explore several biases and present mathematical models of their effect on collaborative decision-making. We focus on anchoring bias and the associated anchoring-and-adjustment heuristic that is important towards optimizing team performance. We validate the use of time as an effective strategy for mitigating anchoring bias through a user study. Furthermore, through a time-based resource allocation formulation, we provide an optimal allocation strategy that attempts to achieve the "best of both worlds" by capitalizing on the complementary knowledge presented by the decision-maker and the AI model. Using this strategy, we obtain human-AI team performance that is better than the AI alone, as well as better than having only the human decide in cases where the AI predicts correctly. When the AI is incorrect, the information it provides the human distracts them from the correct decision, thus reducing their performance. Giving them information about the AI confidence as explanation for the time allocation alleviates some of these issues and brings us closer to the ideal Human-AI team performance shown in Figure \ref{fig:idealhumanAI}.

\paragraph{\textbf{Future work.}} Our work shows that a time-based strategy with explanation, built on the cognitive tendencies of the decision-maker in a collaborative setting, can help decision-makers adjust their decisions correctly. More generally, our work showcases the importance of accounting for cognitive biases in decision-making, where in the future we would want to study other important biases such as confirmation bias or weak evidence effect. This paper opens up several directions for future work where explanation strategies in this collaborative setting are studied and designed based on the cognitive biases of the human decision-maker. Another interesting direction is to utilize the resource allocation framework for other cognitive biases based on their de-biasing strategies. \\

\noindent \textbf{Acknowledgements.} The work of Charvi Rastogi was supported in part by NSF CIF 1763734.

\bibliographystyle{apalike}
\bibliography{bibtex}

~\\~\\
\appendix

~\\\noindent{\bf \Large Appendix}

\section{Additional details of optimal resource allocation}

Following from the discussion in Section~\ref{sec:resourceAllocationProblem}, in this section we provide additional details about Assumption~1 and the optimality of the confidence-based time allocation policy proposed thereafter.

\paragraph{Reasoning for Assumption 1.} To see how Assumption 1~\eqref{eq:assumption1} might hold, we refer first to Figure~\ref{fig:idealhumanAI}, which assumes that human accuracy is higher than AI accuracy when confidence is low, $\Cest_i \in \Cset_L$, and lower when $\Cest_i \in \Cset_H$. (Human accuracy does not have to be uniformly higher/lower in $\Cset_L$/$\Cset_H$ as Figure~\ref{fig:idealhumanAI} suggests.) At $t = 0$, $\E[R_i \given \Cest_i \in \Cset, \tottime{i}  = 0]$ is equal to AI accuracy conditioned on $\Cset = \Cset_L, \Cset_H$, and by giving the human more time to de-anchor, we might expect $\E[R_i \given \Cest_i \in \Cset_L, \tottime{i}  = t]$ to increase and $\E[R_i \given \Cest_i \in \Cset_H, \tottime{i}  = t]$ to decrease. A second way to understand Assumption 1 is to break down the conditional accuracy into two parts:  
\begin{align}
\nonumber \E[R_i \given \Cest_i \in \Cset, \tottime{i}  = t] =\,\,  & \P(\perceived_i = \predict_i \given \predict_i= \truth_i, \Cest_i \in \Cset, \tottime{i}  = t)\P(\predict_i= \truth_i \given \Cest_i \in \Cset)  \\&\quad +\P(\perceived_i \neq \predict_i \given \predict_i \neq \truth_i, \Cest_i \in \Cset, \tottime{i}  = t)\P(\predict_i \neq \truth_i \given \Cest_i \in \Cset), 
\label{eq:fourPart}
\end{align}
for $\Cset = \Cset_L, \Cset_H$. The results of Section~\ref{sec:prelimStudy} indicate that the disagreement probability in the second RHS term in \eqref{eq:fourPart} increases or stays the same with time $t$, and the agreement probability in the first term decreases or stays the same with $t$. For $\Cset = \Cset_L$, assuming positive correlation between low confidence $\Cest_i \in \Cset_L$ and low accuracy (not necessarily the perfect correlation in Figure~\ref{fig:idealhumanAI}), the probability $\P(\predict_i \neq \truth_i \given \Cest_i \in \Cset_L)$ tends to be larger and the second term dominates, resulting in the LHS increasing with $t$. For $\Cset = \Cset_H$, the first term tends to dominate, leading to decrease with $t$. \\

To show why the confidence-based policy has higher accuracy, we provide the following corollary.

\begin{corollary}\label{cor:constRand}
Consider the two time allocation strategies defined above - (1) Constant time and (2) Random time, in AI-assisted decision-making where we have total $\num$ trials and time $\tottime{}$. Suppose Assumption 1 stated in~\eqref{eq:assumption1} holds. Then the human-AI accuracy of the confidence-based time allocation policy is greater than or equal to the accuracy of the constant time allocation and random time allocation strategies.
\end{corollary}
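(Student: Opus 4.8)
The plan is to reduce everything to the two one-dimensional conditional-accuracy curves and exploit the monotonicity granted by Assumption~1. Write $p_L = \P(\Cest_i \in \Cset_L)$ and $p_H = \P(\Cest_i \in \Cset_H) = 1-p_L$, and abbreviate the conditional accuracies by $g_L(t) = \E[R_i \given \Cest_i \in \Cset_L, \tottime{i}=t]$ and $g_H(t) = \E[R_i \given \Cest_i \in \Cset_H, \tottime{i}=t]$. Assumption~1 says exactly that $g_L$ is nondecreasing and $g_H$ is nonincreasing. With $t_L = t_{\max}$ and $t_H = t_{\min}$, the confidence-based accuracy from Proposition~\ref{prop:optimal} is
\begin{equation}
\E[R]_{\mathrm{CB}} = p_L\, g_L(t_{\max}) + p_H\, g_H(t_{\min}).
\end{equation}
The budget constraint $p_L t_{\max} + p_H t_{\min} = T/N$ makes $T/N$ a convex combination of $t_{\min}$ and $t_{\max}$, so $t_{\min} \le T/N \le t_{\max}$; this observation is what links both baselines to the confidence-based allocation.

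For the constant-time policy every trial receives $T/N$, so $\E[R]_{\mathrm{C}} = p_L\, g_L(T/N) + p_H\, g_H(T/N)$. Since $T/N \le t_{\max}$ and $g_L$ is nondecreasing, $g_L(T/N) \le g_L(t_{\max})$; since $T/N \ge t_{\min}$ and $g_H$ is nonincreasing, $g_H(T/N) \le g_H(t_{\min})$. Combining these with the nonnegative weights $p_L, p_H$ yields $\E[R]_{\mathrm{C}} \le \E[R]_{\mathrm{CB}}$ immediately. This first comparison needs nothing beyond monotonicity and the convex-combination observation.

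For the random-time policy the key structural fact is that the time label ($t_L$ or $t_H$) is assigned independently of a trial's confidence, with marginal probabilities $p_L$ and $p_H$. Conditioning a fixed trial on its confidence class and on its (independent) assigned time and taking expectations, I would write
\begin{equation}
\E[R]_{\mathrm{R}} = p_L\bigl(p_L\, g_L(t_{\max}) + p_H\, g_L(t_{\min})\bigr) + p_H\bigl(p_L\, g_H(t_{\max}) + p_H\, g_H(t_{\min})\bigr).
\end{equation}
Subtracting this from $\E[R]_{\mathrm{CB}}$ and using $1-p_L = p_H$ and $1-p_H = p_L$, the cross terms collapse and I expect the clean identity
\begin{equation}
\E[R]_{\mathrm{CB}} - \E[R]_{\mathrm{R}} = p_L p_H\Bigl[\bigl(g_L(t_{\max}) - g_L(t_{\min})\bigr) + \bigl(g_H(t_{\min}) - g_H(t_{\max})\bigr)\Bigr],
\end{equation}
which is nonnegative since both bracketed differences are nonnegative by Assumption~1 and $p_L p_H \ge 0$. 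Averaging the per-trial inequalities over $i$ gives $\E[R]_{\mathrm{CB}} \ge \E[R]_{\mathrm{R}}$.

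The main obstacle is not any inequality but pinning down the probabilistic model for the random-time policy precisely enough to justify the factorized expression for $\E[R]_{\mathrm{R}}$: the policy selects a fixed count $N p_L$ of trials uniformly at random, so the time labels are exchangeable but not mutually independent across trials. I would argue that only the per-trial marginal is needed --- for each fixed $i$, the event that trial $i$ is labeled $t_L$ has probability $p_L$ and is independent of $\Cest_i$ because the selection ignores confidence --- so linearity of expectation lets me treat each trial in isolation and the factorization holds term by term. A minor technical caveat to flag is the integrality of $N p_L$; I would note that this is built into the policy definition and does not affect the single-trial marginal argument.
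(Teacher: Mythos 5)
Your proof is correct and follows essentially the same route as the paper's: condition each trial on its confidence class and assigned time, and apply the two monotonicity inequalities of Assumption~1 (the paper bounds the four terms of the random-policy expansion one by one, whereas you collapse the difference into the factored identity $p_L p_H[(g_L(t_{\max})-g_L(t_{\min}))+(g_H(t_{\min})-g_H(t_{\max}))]$, which is the same argument in a tidier package). Your explicit justification that $t_{\min}\le T/N\le t_{\max}$ and your remark that only the per-trial marginal of the random time assignment matters are both details the paper leaves implicit, and they are handled correctly.
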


\begin{proof}
Let the two-level confidence based allocation policy be denoted by $\pi$. Now, we have that the accuracy for each round under this policy, 
\begin{align}
    \E_\pi[R_i] &= \E\left[ \E[R_i \given \Cest_i, \tottime{i}=\pi(\Cest_i)] \right]\nonumber\\
    &= \P(\Cest_i \in \Cset_L) \E[R_i \given \Cest_i \in \Cset_L, \tottime{i} = t_L] + \P(\Cest_i \in \Cset_H) \E[R_i \given \Cest_i \in \Cset_H, \tottime{i} = t_H].
\end{align}
For the constant allocation policy, we get
\begin{align}
    \E_{\mathrm{const}}[R_i] &= \E[R_i \given \tottime{i} = \frac{\tottime{}}{ \num}]\nonumber\\
    &= \P(\Cest_i \in \Cset_L) \E[R_i \given \Cest_i \in \Cset_L, \tottime{i} = \frac{\tottime{}}{ \num}] + \P(\Cest_i \in \Cset_H) \E[R_i \given \Cest_i \in \Cset_H, \tottime{i} = \frac{\tottime{}}{ \num}].
\end{align}
Now, according to assumption 1 in~\eqref{eq:assumption1}, we have $ \E[R_i \given \Cest_i \in \Cset_L, \tottime{i} = t_L]\geq \E[R_i \given \Cest_i \in \Cset_L, \tottime{i} = \frac{\tottime{}}{ \num}]\nonumber$, and $ \E[R_i \given \Cest_i \in \Cset_H, \tottime{i} = t_H] \geq \E[R_i \given \Cest_i \in \Cset_H, \tottime{i} = \frac{\tottime{}}{ \num}]$. Thus, $E_\pi[R_i] \geq \E_{\mathrm{const}}[R_i]$. 
Similarly for random allocation, we have
\begin{align}
    \E_{\mathrm{rand}}[R_i] &= \P( \tottime{i}= t_L) \E[R_i \given \tottime{i} = t_L] + \P(\tottime{i} = t_H) \E[R_i \given \tottime{i} = t_H] \nonumber\\
    &= \P(\Cest_i \in \Cset_L) \P(\tottime{i} = t_L) \E[R_i \given \Cest_i \in \Cset_L, \tottime{i} = t_L]  \nonumber\\
    &\quad {} + \P(\Cest_i \in \Cset_H) \P(\tottime{i}= t_L) \E[R_i \given \Cest_i \in \Cset_H, \tottime{i} = t_L]\nonumber\\
    &\quad {} + \P(\Cest_i \in \Cset_L) \P(\tottime{i} = t_H) \E[R_i \given \Cest_i \in \Cset_L, \tottime{i} = t_H]  \nonumber\\
    &\quad {} + \P(\Cest_i \in \Cset_H) \P(\tottime{i} = t_H) \E[R_i \given \Cest_i \in \Cset_H, \tottime{i} = t_H].
\end{align}
Using the assumptions~\eqref{eq:assumption1} as stated for constant allocation, we prove that $E_\pi[R_i] \geq \E_{\mathrm{rand}}[R_i]$. 
\end{proof}

\end{document}